\documentclass[12pt]{amsart}
\usepackage{amsmath,amscd,amsfonts,amssymb, color,bbm, bm,
braket}
\usepackage{latexsym, graphicx, pstricks,rotating, enumerate}
\usepackage[pdftex,bookmarks,colorlinks]{hyperref}
\definecolor{darkblue}{rgb}{0.0,0.0,0.3}
\hypersetup{colorlinks=true,citecolor=darkblue,
unicode=true,pdftitle=cosa.pdf,pdfauthor=Ritabrata
Sengupta,bookmarks=false, urlcolor=darkblue} 
\numberwithin{equation}{section}
\newtheorem{prop}{Proposition}[section]
\newtheorem{defin}{Definition}[section]
\newtheorem{rem}{Remark}[section]

\newtheorem{theorem}{Theorem}[section]

\newtheorem{corollary}{Corollary}[section]

\setlength{\oddsidemargin}{0pt}
\setlength{\evensidemargin}{0pt}
\setlength{\textwidth}{6.7in}
\setlength{\topmargin}{0in}
\setlength{\textheight}{8.5in}

\newcommand{\what}{\widehat}
\newcommand{\wtilde}{\widetilde}
\newcommand{\tr}{\mathrm{Tr\,}}
\newcommand{\dd}{\mathrm{d}}
\newcommand{\re}{\mathrm{Re\,}}
\newcommand{\im}{\mathrm{Im\,}}
\newcommand{\cov}{\mathrm{Cov}}

\setcounter{secnumdepth}{4}
\begin{document}
\title[On the KWM spectrum of a weakly stationary quatum
process]{On the Kolmogorov
- Wiener - Masani spectrum of a multi-mode weakly stationary
  quantum process}
\author{K. R. Parthasarathy}
\address{Theoretical Statistics and Mathematics Unit, Indian
Statistical Institute, Delhi Centre, 7 S J S Sansanwal Marg,
New Delhi 110 016, India}
\email[K R Parthasarathy]{\href{mailto:krp@isid.ac.in}{krp@isid.ac.in}}
\author{Ritabrata Sengupta}
\address{Theoretical Statistics and Mathematics Unit, Indian
Statistical Institute, Delhi Centre, 7 S J S Sansanwal Marg,
New Delhi 110 016, India}
\email[Ritabrata Sengupta]{\href{mailto:rb@isid.ac.in}{rb@isid.ac.in}}
\begin{abstract}
 
\par We introduce the notion of a $k$-mode weakly stationary
quantum process $\bm{\varrho}$ based on the canonical
Schr\"odinger pairs of position and momentum observables in
copies of $L^2(\mathbb{R}^k)$, indexed by an additive
abelian group $D$ of countable cardinality. Such observables
admit an autocovariance map $\wtilde{K}$ from $D$
into the space of real $2k \times 2k$ matrices. The map
$\wtilde{K}$ on the discrete group $D$ admits a
spectral representation as the Fourier transform of a
$2k \times 2k$ complex Hermitain matrix-valued totally
finite measure $\Phi$ on the compact character group
$\what{D}$, called the Kolmogorov-Wiener-Masani (KWM)
\emph{spectrum} of the process $\bm{\varrho}$. Necessary and
sufficient conditions on a $2k \times 2k$ complex Hermitian
matrix-valued measure $\Phi$ on $\what{D}$ to be the KWM
spectrum of a process $\bm{\varrho}$ are obtained. This
enables the construction of examples. Our theorem reveals
the dramatic influence of the uncertainty relations among
the position and momentum observables on the KWM spectrum of
the process  $\bm{\varrho}$. In particular, KWM spectrum
cannot admit a gap of positive Haar measure in $\what{D}$. 

\par The relationship between the number of photons in a
particular mode at any site of the process and its KWM
spectrum needs further investigation. 

\smallskip
\noindent \textbf{Keywords.} Weakly stationary quantum
process, Kolmogorov-Wiener-Masani spectrum, 
 autocovariance map, spectral representation,
uncertainty relations.

\smallskip

\noindent \textbf{Mathematics Subject Classification
(2010):} 81S25, 60G25, 60G10. 
\end{abstract}
\thanks{RS acknowledges financial
support from the National Board for Higher Mathematics,
Govt. of India.} 
\maketitle
\section{Introduction}

\par In his celebrated little book ``Osnovnye ponyatiya
teorii veroyatnostei" \cite{kol}, A. N. Kolmogorov
introduced the notion of a stochastic process as a
consistent family of finite dimensional probability
distributions in $\mathbb{R}^n,\, n=1,2,\cdots$. In the same
spirit a quantum process can be described as a consistent
family of density operators or, equivalently, states in
tensor products $\mathcal{H}_1 \otimes \cdots \otimes
\mathcal{H}_n$ of Hilbert spaces with $n=1,2,\cdots$. One
can replace the `time set' $\{1,2,\cdots\}$ by an abstract
countable set $D$ with the discrete topology and a family
$\{\mathcal{H}_a : a\in D\}$ of Hilbert spaces. Then a
quantum process yields a density operator $\rho_{a_1, a_2,
\cdots, a_n}$ in $\mathcal{H}_{a_1} \otimes \cdots
\otimes\mathcal{H}_{a_n}$ for every finite sequence $(a_1,
\cdots, a_n)$ with distinct elements from $D$. All these
density operators will obey natural consistency conditions.
For example, the relative trace of $\rho_{a_1, a_2, \cdots,
a_n}$ over $\mathcal{H}_{a_n}$ is $\rho_{a_1, a_2, \cdots,
a_{n-1}}$. If $(b_1,\cdots, b_n)$ is a permutation of $a_1,
\cdots, a_n$ then $\rho_{b_1, \cdots, b_n} = U
\rho_{a_1,\cdots, a_n}  U^{-1}$ where $U$ is the
corresponding Hilbert space isomorphism from
$\mathcal{H}_{a_1} \otimes \cdots \otimes \mathcal{H}_{a_n}$
onto  $\mathcal{H}_{b_1} \otimes \cdots \otimes
\mathcal{H}_{b_n}$ induced by the permutation. We denote the
quantum process over $D$ by 
\begin{equation}\label{13}
\bm{\varrho}=\left\{ \left(\mathcal{H}_{a_1, \cdots, a_n}, \rho_{a_1,
\cdots, a_n} \right): (a_1, \cdots, a_n) \in
\mathcal{S}_D\right\}
\end{equation}
where $\mathcal{S}_D$ denotes the set of all finite length
sequences of distinct elements from the countable set $D$. 

\par In this paper we are interested in the special case
where $\mathcal{H}_a = L^2(\mathbb{R}^k)$ for all $a$ in
$D$, $k$ being a fixed positive integer, called the
\emph{number of modes} of the process. Each $\mathcal{H}_a$
admits Schr\"odinger canonical pairs $q_{ar},\,p_{ar},\,
r=1,2,\cdots, k$ of position and momentum observables
obeying the Heisenberg canonical commutation relations (CCR).
We can look upon $q_{ar},\,p_{ar},\, r=1,2,\cdots, k$ as
observables in $\mathcal{H}_{a_1} \otimes  \cdots \otimes
\mathcal{H}_{a_n}$ whenever the sequence $(a_1, a_2, \cdots,
a_n)$ from $\mathcal{S}_D$ contains the element $a$ and
denote such ampliated observables by the same respective
symbols. With such a convention one obtains the algebra of
all polynomials of all $q_{ar},\,p_{ar},\, r=1,2,\cdots, k,
\, a \in D$ . Using the finite-partite states
$\rho_{a_1,\cdots, a_n},\, (a_1, \cdots, a_n) \in
\mathcal{S}_D$ one can compute the expectations of the
polynomials whenever they exist. Write
\[\left( X_{a\,1},  X_{a\,2}, \cdots, X_{a\,(2k-1)}, X_{a\,
2k} \right) = \left(q_{a\,1}, p_{a\, 1}, \cdots, q_{a\,k},
p_{a\,k} \right)\]
and define the covariances 
\begin{equation}\label{1.1}
\kappa_{r\,s} (a,b) = \left\langle\frac{1}{2}\left( X_{a\,r} X_{b\,
s} + X_{b\, s} X_{a\,r}\right)\right\rangle -
\braket{X_{a\,r}}\braket{X_{b\, s}}
\end{equation}
where $\braket{,}$ denotes expectation. To compute these
quantities we need a knowledge of only the `bipartite'
states $\rho_{a,b},\, (a,b)\in \mathcal{S}_D$. Thus we
obtain a $2k \times 2k$ real matrix-valued covariance kernel
$\mathcal{K} =[[K(a,b)]]$ defined by 
\begin{equation}\label{1.2}
K(a,b) = [[ \kappa_{r\,s}(a,b)]], \quad r,s, \in
\{1,2,\cdots,2k\},
\end{equation}
for $a,\, b\in D$.

\par Suppose $D$ is a countable discrete additive abelian
group with addition operation $+$ and null element $0$. Let
the covariance kernel $\mathcal{K}$ of a $k$-mode quantum
process over $D$ be translation invariant in the sense that 
\begin{equation}\label{1.3}
K(a+x, b+x) =K(a,b) \quad \forall a,b,x \in D.
\end{equation}
Then we say that the quantum process is second order
\emph{weakly stationary}, or, simply, weakly stationary. For
such a process there exists a map $\widetilde{K}$
from $D$ into the space of $2k \times 2k$ real matrices such
that
\begin{equation}\label{1.4}
K(a,b) =\widetilde{K}(b-a),\quad a,\,b
\in D.
\end{equation}
The map $\widetilde{K}$ is called the
\emph{autocovariance map} of the weakly stationary quantum
process. 

\par Owing to the matrix-positivity properties enjoyed by
covariances between observables the autocovariance map
$\widetilde{K}$ satisfies the matrix inequalities 
\begin{equation}\label{1.5}
\sum_{i,\,j} \alpha_i \alpha_j \widetilde{K}(a_j
-a_i) \ge 0
\end{equation}
for any $a_1, a_2, \cdots, a_n \in D$ and real scalars
$\alpha_1, \alpha_2, \cdots, \alpha_n,\, n=1,2,\cdots.$
Thanks to Bochner's  theorem for locally compact
abelian groups there exists a complex-Hermitian and positive
$2k \times 2k$ matrix-valued measure $\Phi$ on the compact
dual character group $\what{D}$ of $D$ such that 
\begin{equation}\label{1.6}
\widetilde{K}(a) =\int_{\what{D}} \chi(a) \,
\Phi(\dd\chi), \quad \text{for all $a$ in $D$}.
\end{equation}
The matrix-valued measure $\Phi$ satisfies the conjugate
symmetry property 
\begin{equation}\label{1.7}
\Phi(S^{-1}) = \overline{\Phi(S)}
\end{equation}
for any Borel set $S \subset \what{D}$. Furthermore, the
Heisenberg uncertainty relations prevailing among  the
various position and momentum observables of the quantum
process reveal their dramatic influence on the measure
$\Phi$ through the matrix inequalities 
\begin{equation}\label{1.8}
\Phi(S) + \frac{\imath}{2} \lambda(S) J_{2k} \ge 0
\end{equation}
for all Borel sets $S \subset \what{D}$, where $\lambda$  is
the normalised Haar measure of the compact group $\what{D}$
and $J_{2k}$ is the fundamental symplectic matrix given by 
\begin{equation}\label{1.9}
J_{2k} = \bigoplus_{\text{$k$-copies}} \begin{bmatrix} 0 & 1
\\ -1 & 0 \end{bmatrix}
\end{equation}
which is a diagonal block matrix with each diagonal block
equal to $J_2$.  The inequality (\ref{1.8}) implies, in
particular, that whenever $\Phi(S) =0, \, \lambda(S)$ is
also zero.

\par Borrowing from the extensive theory of linear least
square prediction of real valued weakly stationary processes
pioneered by A. N. Kolmogorov \cite{kol1,
kol2} and N. Wiener \cite{wiener}, and
multivariate weakly stationary processes by N. Wiener and P.
Masani \cite{WM1, WM2} we call (\ref{1.6}) the spectral
representation of $\widetilde{\mathcal{K}}$ in $\what{D}$
and the matrix-valued positive measure $\Phi$ the \emph{Kolmogorov
- Wiener - Masani spectrum} (or \emph{KWM spectrum}) of the
  autocovariance map $\widetilde{\mathcal{K}}$ of the underlying
quantum process. 

\par Inequality (\ref{1.8}) implies that whenever
$\Phi(S)=0$ for some Borel set $S \subset \what{D}$, then
$\lambda(S)=0$. In other words, the KWM spectrum does not
admit a `Haar gap'. 

\par Conversely, given a complex Hermitian positive $2k
\times 2k$ matrix-valued measure $\Phi$ on the Borel
$\sigma$-algebra of $\what{D}$ satisfying the conjugate
symmetry condition (\ref{1.7}), the spectral uncertainty
relations (\ref{1.8}), and the condition
$\Phi(\what{D}) =M$,
there exists a weakly stationary $k$-mode quantum process
over $D$ with KWM spectrum $\Phi$. Indeed, such a process
can be realized as a mean zero quantum Gaussian process in
the sense that all its finite-partite states
$\rho_{a_1,\cdots,a_n},\, (a_1, \cdots, a_n)  \in
\mathcal{S}_D$ are mean zero Gaussian states.

\par The spectral representation of the autocovariance
function and its converse enable us to construct interesting
examples of weakly stationary quantum processes. 

\section{Quantum processes}\label{sec:2}

\par A quantum system in its most elementary form is
determined by a pair $(\mathcal{H}, \rho)$ where
$\mathcal{H}$ is a complex separable Hilbert space and
$\rho$ is a density operator in $\mathcal{H}$, i.e., a
positive operator with unit trace. The operator $\rho$ is
called the state of the system. We shall deal with several
quantum systems and assume that all the Hilbert spaces in
this paper are complex and separable. Scalar products in
Hilbert spaces will be expressed in the Dirac notation and
adjoints of operators as well as matrices will be indicated
by the symbol $\dag$. By a positive operator $X$ in a
Hilbert space $\mathcal{H}$ we mean that $\braket{u|X|u} \ge
0$ for all $u \in \mathcal{H}$. By a positive $n \times n$
matrix we mean an $n \times n$ Hermitian matrix which is
positive semidefinite.

\par If $\mathcal{H} =  \mathcal{H}_1 \otimes \mathcal{H}_2
\otimes \cdots \otimes \mathcal{H}_n$ is the tensor product
of Hilbert spaces $\mathcal{H}_i, \, 1 \le i \le n, \, \rho$
is a state in $\mathcal{H}$ and $F \subset \{1,2,\cdots,n\}$
is the subset $\{i_1< i_2 < \cdots < i_k\}$ then we write
$\mathcal{H}_F = \mathcal{H}_{i_1} \otimes \cdots \otimes
\mathcal{H}_{i_k}$. One obtains a state $\rho_F$ in
$\mathcal{H}_F$ by taking the relative trace of $\rho$ 
successively in $\mathcal{H}_i,\, i\notin F$ in some order.
The resulting state $\rho_F$ is independent of the order
in which the traces are taken. The system $(\mathcal{H}_F,
\rho_F)$ is called the $F$-marginal of $(\mathcal{H},\rho)$.

\par In the Hilbert space of any quantum system a bounded or
unbounded self-adjoint operator $X$ is called an
\emph{observable} of the system. Suppose
$\mathcal{F}_\mathbb{R}$ is the Borel $\sigma$-algebra of
$\mathbb{R}$ and $P^X(\cdot)$ is the spectral measure of $X$
on $\mathcal{F}_\mathbb{R}$. Then the quantity $\tr \rho
P^X(E),\, E \in \mathcal{F}_\mathbb{R}$ is interpreted as
the probability that the observable takes a value in $E$ in
the state $\rho$. Thus $\tr \rho P^X(\cdot)$ is the
distribution of $X$ in the state $\rho$. Such an
interpretation enables the computation of all moments of
$X$. Indeed, the $n$-th moment of $X$, if it exists, is
denoted by $\braket{X^n}$ and is given by 
\[\braket{X^n} = \tr X^n \rho.\]
If $X$, $Y$ are two observables such that $XY+YX$ is also an
observable then the covariance between $X$ and $Y$ in the
state $\rho$ is denoted by $\cov(X,Y)$ and is defined as
\[\cov(X,Y) = \braket{\frac{1}{2}(XY+YX)} -
\braket{X}\braket{Y}.\]
The quantity $\cov(X,X)$ is called the variance of $X$.
If $X_1,\, X_2,\,\cdots,\, X_n$ are observables with
well-defined covariance between $X_i$ and $X_j$ for all
$i,\,j$  then the $n \times n$ positive matrix 
\[\Sigma_n = \Sigma_n (X_1, \cdots, X_n) =
[[\cov(X_i,X_j)]]\]
is called the \emph{covariance matrix} of the observables
$(X_1,X_2,\cdots, X_n)$ in the state $\rho$. 

\par Consider a composite quantum system
$(\mathcal{H},\rho)$ where $\mathcal{H} = \mathcal{H}_1
\otimes \mathcal{H}_2 \otimes \cdots \otimes \mathcal{H}_n$.
If the set $\{1,2,\cdots,n\} = E \cup F$ with $E\cap F
=\emptyset, \, E\neq \emptyset,\, F \neq \emptyset$ then
$\mathcal{H}$ can be viewed as the tensor product 
\[ \mathcal{H} =\mathcal{H}_E \otimes \mathcal{H}_F\]
and an observable in $\mathcal{H}_E$ can be looked upon as
the observable $X_E \otimes I_F$ in $\mathcal{H}$ with $I_F$
being the identity operator in $\mathcal{H}_F$. We call $X_E
\otimes I_F$ the \emph{ampliation} of $X_E$ in $\mathcal{H}$
	and denote it by the same symbol $X_E$. If $\rho_E$ is the
$E$-marginal of $\rho$ in $\mathcal{H}_E$ then 
\[\braket{X_E} = \tr X_E \rho_E = \tr X_E \rho = \braket{X_E
\otimes I_F}.\]

\par We now introduce the notion of a quantum process over a
countable index set $D$. Let $\{\mathcal{H}_a: a \in D\}$ be
a family of Hilbert spaces. Denote by $\mathcal{S}_D$ the
set of all finite sequences of distinct elements from $D$. 
Suppose $\rho_{a_1, a_2, \cdots ,a_n}$ is a density operator in
$\mathcal{H}_{a_1, a_2,\cdots, a_n}$ as in 
(\ref{13}) for each $(a_1, a_2,
\cdots, a_n)$ in $\mathcal{S}_D$ satisfying the following
properties: 
\begin{enumerate}
\item If $\{a_1, a_2 \cdots, a_n\} = \{b_1, b_2, \cdots,
b_n\}$ as sets and $\pi$ is a permutation of
$\{1,2,\cdots,n\}$ such that $a_{\pi(j)}=b_j,\, \forall j$
and 
\[U_\pi:\mathcal{H}_{a_1, a_2,\cdots, a_n}  \rightarrow
\mathcal{H}_{b_1, b_2,\cdots, b_n}\]
is the natural Hilbert space isomorphism induced by $\pi$
then 
\[\rho_{b_1, b_2, \cdots, b_n} = U_\pi \rho_{a_1 ,a_2
,\cdots,
a_n} U_\pi^{-1}.\]

\item The $\{a_{n+1}\}$-marginal of $\rho_{a_1, a_2, \cdots a_{n+1}}$
is equal to $\rho_{a_1, a_2, \cdots, a_n}$ for all $(a_1, a_2,
\cdots, a_{n+1}) \in \mathcal{S}_D,\, n=1,2,\cdots$.
\end{enumerate}
Then we say that $\{\rho_{a_1, a_2, \cdots, a_n}, (a_1, \cdots,a_n)\in
\mathcal{S}_D\}$ is a consistent family of states. The
family $\bm{\varrho}= \{(\mathcal{H}_{a_1,\cdots, a_n}, \rho_{a_1,\cdots,
a_n}), (a_1,\cdots ,a_n) \in \mathcal{S}_D\}$
of finite-partite quantum systems is called a \emph{quantum process} over $D$.

\par One obtains interesting examples of discrete `time'
quantum processes with $D = \mathbb{Z},\, \mathbb{Z}^d$ or a
general discrete abelian group. When $D$ is $\mathbb{Z}$,
the element $a$ in $\mathbb{Z}$ can be interpreted as time.
In general, $a$ in $D$ is interpreted as site. 

\par Suppose $D=\{0,1,2,\cdots\}$ and $\mathcal{H}_{n]} =
\mathcal{H}_0 \otimes \mathcal{H}_1 \otimes \cdots \otimes
\mathcal{H}_n$. Let $\rho_{n]}$ be a density operator in
$\mathcal{H}_{n]}$ such that $\rho_{n-1]}$ is the marginal
in $\mathcal{H}_{n-1]}$ obtained by tracing out $\rho_{n]}$
over $\mathcal{H}_n$ for each $n$. Then
$\{(\mathcal{H}_{n]}, \rho_{n]}): n=0,1,2,\cdots\}$  yields a
quantum process. Denote by $\mathcal{B}_{n]}$ the C*
algebra of all bounded operators in $\mathcal{H}_{n]}$. Then there
is a natural C* embedding $\phi_n:\mathcal{B}_{n]}
\rightarrow \mathcal{B}_{n+1]}$ with the property 
\[\phi_n(X) = X \otimes I,\quad X \in \mathcal{B}_{n]},\]
where $I$ is the identity operator in $\mathcal{H}_{n+1}$. This
enables the construction of an inductive limit C* algebra
$\mathcal{B}_\infty$ with a C* embedding
$\pi_n:\mathcal{B}_{n]} \rightarrow \mathcal{B}_\infty$ such
that the sequence $\{\pi_n(\mathcal{B}_{n]})\}$ is
increasing in $n$ and $\bigcup_n \pi_n(\mathcal{B}_{n]})$ is
dense in $\mathcal{B}_\infty$. This yields a normalized
positive linear functional $\omega$ in $\mathcal{B}_\infty$
such that
\[\omega(\pi_n(X)) = \rho_{n]} (X), \quad X \in
\mathcal{B}_{n]}, \, n=0,1,2,,\cdots.\]
In other words $(\mathcal{B}_\infty,\omega)$ is a C*
probability space which may be considered as the analogue of
Kolmogorov's measure space constructed from a consistent
family of finite dimensional probability distributions.
However, there is no limiting Hilbert space in general with a density
operator. A similar construction of a C* probability space
is possible for a quantum process over any countable index set
$D$.

\begin{defin}
Suppose $D$ is a countable abelian group with addition
operation $+$, $\mathcal{H}_a = \mathcal{H}$ for all $a \in
D$, and $\bm{\varrho}$
is a quantum process over $D$. Then it is
said to be \emph{strictly stationary} or \emph{translation
invariant} if 
\[ \rho_{a_1+x, a_2+x, \cdots , a_n+x } = \rho_{a_1, a_2,
\cdots, a_n} \quad \forall x \in D, \, (a_1,\cdots,a_n) \in
\mathcal{S}_D. \]
\end{defin}

\par Let $\{\rho_{a_1,\cdots, a_n} \},\,
\{\sigma_{a_1,\cdots, a_n}\}, \, (a_1, \cdots, a_n) \in
\mathcal{S}_D$ be a pair of consistent families of
finite-partite states in $\{\mathcal{H}_{a_1,\cdots,
a_n}\}$. Then, for any $0< p<1$
\[\tau_{a_1,\cdots, a_n} = p \rho_{a_1,\cdots, a_n} +(1-p)
\sigma_{a_1,\cdots, a_n}, \quad (a_1,\cdots, a_n)\in
\mathcal{S}_D\]
yields a consistent family of finite-partite states.

\par Suppose, $a \mapsto U_a, \, a\in D$ is any map where
$U_a$ is a unitary operator in $\mathcal{H}_a$ for every
$a$. Then 
\[\rho_{a_1,\cdots, a_n}' = \left(U_{a_1} \otimes \cdots
\otimes U_{a_n} \right) \rho_{a_1,\cdots, a_n}
\left(U_{a_1}^\dag \otimes \cdots \otimes U_{a_n}^\dag \right), \quad (a_1,
\cdots, a_n) \in \mathcal{S}_D\]
also yields a consistent family of states. Indeed, this is a
consequence of the following proposition.
\begin{prop}
Let $\mathcal{H},\, \mathcal{K}$ be Hilbert spaces, $\rho$
a state in $\mathcal{H} \otimes \mathcal{K}$, and $U,\,
V$ be unitary operators in $\mathcal{H}$ and  $\mathcal{K}$
respectively. Then 
\[\tr_\mathcal{K} (U \otimes V) \rho (U \otimes V)^\dag = U
\left( \tr_\mathcal{K}\, \rho\right) U^\dag,\]
where $\tr_\mathcal{K}$ is relative trace over
$\mathcal{K}$.
\end{prop}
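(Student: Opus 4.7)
The plan is to reduce the identity to the defining/characterising property of the partial trace and exploit unitarity through the cyclicity of the full trace. The cleanest route avoids manipulating the partial trace via a basis expansion (which would work too, but requires one to keep track of the change of orthonormal basis $e_i \mapsto V^\dag e_i$); instead I would use the fact that $\tr_\mathcal{K} \sigma$ is uniquely characterised, among trace class operators on $\mathcal{H}$, by the property
\[
\tr_\mathcal{H}\!\bigl[X \, \tr_\mathcal{K} \sigma\bigr] \;=\; \tr_{\mathcal{H}\otimes\mathcal{K}}\!\bigl[(X \otimes I_\mathcal{K})\, \sigma\bigr]
\]
for every bounded operator $X$ on $\mathcal{H}$.

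I would then fix an arbitrary bounded $X$ on $\mathcal{H}$ and compute, applying the characterisation to $\sigma = (U\otimes V)\rho(U\otimes V)^\dag$:
\[
\tr\!\bigl[X\, \tr_\mathcal{K}\!\bigl((U\otimes V)\rho(U\otimes V)^\dag\bigr)\bigr] \;=\; \tr\!\bigl[(X\otimes I_\mathcal{K})(U\otimes V)\rho(U\otimes V)^\dag\bigr].
\]
Next, using cyclicity of $\tr$ on $\mathcal{H}\otimes\mathcal{K}$, the tensor-product multiplication rule $(A\otimes B)(C\otimes D)=AC\otimes BD$, and the unitarity identity $V^\dag V=I_\mathcal{K}$, the right-hand side becomes
\[
\tr\!\bigl[(U\otimes V)^\dag(X\otimes I_\mathcal{K})(U\otimes V)\,\rho\bigr] \;=\; \tr\!\bigl[(U^\dag X U \otimes I_\mathcal{K})\,\rho\bigr].
\]
Applying the characterising property of $\tr_\mathcal{K}$ once more and cyclicity on $\mathcal{H}$, this equals $\tr\!\bigl[U^\dag X U\, \tr_\mathcal{K}\rho\bigr] = \tr\!\bigl[X\, U\, \tr_\mathcal{K}\rho\, U^\dag\bigr]$.

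So for every bounded $X$ on $\mathcal{H}$,
\[
\tr\!\bigl[X\,\bigl(\tr_\mathcal{K}((U\otimes V)\rho(U\otimes V)^\dag) - U(\tr_\mathcal{K}\rho)U^\dag\bigr)\bigr] = 0,
\]
and since a trace class operator annihilated by pairing with every bounded operator is zero, the two partial traces are equal. The only real obstacle, which is minor, is having the characterising property of $\tr_\mathcal{K}$ available; I would either cite it as standard or, alternatively, prove the identity directly by writing $\tr_\mathcal{K} = \sum_i (I\otimes\langle e_i|)\,\cdot\,(I\otimes|e_i\rangle)$ for any orthonormal basis $\{e_i\}$ of $\mathcal{K}$ and observing that $\{V^\dag e_i\}$ is also an orthonormal basis, which absorbs the $V$'s and leaves the $U$'s factored out.
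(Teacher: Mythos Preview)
Your argument is correct. The main route you take---via the universal/characterising property of the partial trace together with cyclicity of the full trace---is a genuinely different approach from the paper's. The paper argues in one line using exactly the alternative you mention at the end: the relative trace over $\mathcal{K}$ can be computed with respect to any orthonormal basis $\{e_j\}$ of $\mathcal{K}$, and since $\{V^\dag e_j\}$ is again an orthonormal basis, the $V$'s are absorbed and the $U$'s factor out. Your duality argument is more structural and basis-free, and it generalises immediately (e.g., to completely positive trace-preserving maps on the $\mathcal{K}$ factor in place of $V(\cdot)V^\dag$), at the cost of invoking the characterising property and the separating property of the trace pairing. The paper's basis-change argument is shorter and entirely self-contained, needing only the definition of the partial trace via an orthonormal basis.
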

\begin{proof}
This is immediate from the fact that relative trace over
$\mathcal{K}$ can be computed by using any orthonormal basis
$\{\bm{e}_j\}$ in $\mathcal{K}$, and if $\{\bm{e}_j\}$ is
one such basis so is $\{V^\dag \bm{e}_j\}$.
\end{proof}

\par Combining the two elementary remarks above we can
construct new quantum processes over $D$ from a given
quantum process $\left\{ \left( \mathcal{H}_{a_1, \cdots,
a_n}, \rho_{a_1, \cdots, a_n} \right), (a_1, \cdots, a_n)
\in \mathcal{S}_D\right\}$ as follows: Start with a
probability space $( \Omega, \mathcal{F}, \mathcal{P})$ and
a random process $\{ U_a(\omega): a \in D\}$ where
$U_a(\omega)$ is a unitary operator in $\mathcal{H}_a$ for every
$a$. Define
\begin{equation}\label{9.1}
\rho_{a_1, \cdots, a_n}' = \int_\Omega P(\dd\omega)
\left(U_{a_1} \otimes \cdots \otimes U_{a_n} \right)
\rho_{a_1,\cdots, a_n} \left(U_{a_1} \otimes \cdots \otimes
U_{a_n} \right)^\dag.
\end{equation}
Then $\{\rho_{a_1,\cdots, a_n}' ,(a_1, \cdots, a_n) \in
\mathcal{S}_D\}$ is also a consistent family of
finite-partite states. 

\begin{rem}
When $D$ is a countable additive abelian group and
$\mathcal{H}_a = \mathcal{H}$ for all $a \in D$,
$\bm{\varrho}$
is a strictly stationary quantum process and the random
process $\{U_a(\omega): a \in D\}$ is also strictly
stationary, then the quantum process 
$\bm{\varrho}'$ determined by equation (\ref{9.1}) is also strictly stationary. 
\end{rem}
\section{Multi-mode processes and their covariance kernels} \label{sec:3}

\par We now pass on to the definition of a \emph{$k$-mode
quantum process} over a countable index set $D$. Let $\mathcal{H}_a =
L^2(\mathbb{R}^k)$ for each $a \in D$, where $k$ is a fixed
positive integer called the number of modes. We view
$\mathcal{H}_a$ as the $a$-th copy of $L^2(\mathbb{R}^k)$
and introduce the canonical Schr\"odinger pairs of position
and momentum observables $q_{aj},\, p_{aj},\, 1 \le j \le
k$ given by
\begin{eqnarray*}
\left( q_{aj} f\right) (\bm{x}) &=& x_j f(\bm{x}),\\
\left( p_{aj} f\right) (\bm{x}) &=& \frac{1}{\imath}
\frac{\partial }{\partial x_j} f(\bm{x})
\end{eqnarray*}
on their respective maximal domains in $L^2(\mathbb{R}^k)$,
$\bm{x}$ denoting $(x_1, x_2, \cdots, x_k) \in
\mathbb{R}^k$. We arrange these $2k$ observables as 
\[\left( X_{a1}, X_{a2}, \cdots, X_{a\,{2k-1}}, X_{a{2k}} \right) =
\left( q_{a1}, p_{a1},\cdots, q_{ak}, p_{ak} \right). \]
Let now $\bm{\varrho}$ 
be a quantum process over $D$. Then
$X_{a\,r}$ can be viewed as an ampliated observable in
$\mathcal{H}_{a_1, a_2, \cdots, a_n}$ whenever the element
$a$ occurs in the sequence $a_1, a_2, \cdots, a_n$. We
assume that all observables which are closures of polynomials
of degree not exceeding $2$ in 
$\{X_{a\,r},\, a \in D,\, r =1,2,\cdots,2k\}$ have finite
expectations under the process so that $X_{a\,r}$ and
$X_{b\,s}$ have a  well-defined covariance for any $a,b\in
D$, $r,s\in \{1,2,\cdots,2k\}$. We write for any $a,b \in D$
\[ \kappa_{r,s}(a,b) = \cov \left( X_{a\,r}, X_{b\,s}
\right),\quad r,\,s \in \{1,2,\cdots,2k\},\]
where the covariances can be evaluated in any state
$\rho_{a_1, a_2, \cdots, a_n}$ when both $a$ and $b$ occur in
$(a_1,\cdots,a_n) \in \mathcal{S}_D$.
Indeed, this follows from consistency of the
states occurring in the quantum process. We call
$\mathcal{K} = [[K(a,b)]],\, a,b \in D$ the \emph{covariance
kernel} of the $k$-mode quantum process $\bm{\varrho}$.

\par If $D$ is an additive abelian group, $\braket{X_{aj}}
=0$ for all $a\in D,\, 1 \le j \le 2k$ and $K(a,b) = K(a+c,
b+c)$ for all $a,\, b,\, c\in D$ we then say that
$\bm{\varrho}$
 is a mean zero \emph{second order weakly stationary} or
simply \emph{weakly stationary} $k$-mode quantum process. In
such a case, there exists a map $\wtilde{K}$ from $D$ into
the space of $2k \times 2k$ real matrices, such that
$K(a,b) = \wtilde{K}(b-a)$. This map
$\wtilde{K}$
is called the \emph{autocovariance map} of the weakly
stationary process. 

\begin{theorem}\label{th:1.1}
Let $K(a,b)=[[\kappa_{r\,s}(a,b)]],\, a,b\in D$ be a family of $2k \times 2k$ real
matrices satisfying the following conditions:
\[\kappa_{r\,s}(a,b) =\kappa_{s\,r}(b,a),\quad r,\,s\in
\{1,2,\cdots,2k\},\quad a,b \in D.\]
Then there exists a $k$-mode quantum process $\bm{\varrho}$
with covariance kernel $K(\cdot,\cdot)$ if and
only if for any sequence $(a_1, \cdots,a_n)\in
\mathcal{S}_D$ the block matrix $[[ K(a_i,a_j) ]]$ satisfies
the matrix inequality 
\begin{equation}\label{ce} 
[[ K(a_i, a_j) ]] + \frac{\imath}{2} J_{2kn} \ge 0.
\end{equation}
\end{theorem}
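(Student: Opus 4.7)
For the necessity direction, I fix $(a_1,\ldots,a_n) \in \mathcal{S}_D$ and consider the $2kn$ ampliated observables $\{X_{a_i,r}: 1\le i \le n,\, 1 \le r \le 2k\}$ on $\mathcal{H}_{a_1,\ldots,a_n} = L^2(\mathbb{R}^k)^{\otimes n}$. Since observables acting on distinct tensor factors commute and the canonical commutation relations hold on each factor, one has $[X_{a_i,r}, X_{a_j,s}] = \imath\,(J_{2k})_{rs}\,\delta_{ij}\,I$; arranged lexicographically in $(i,r)$, the commutator matrix of the family is exactly $\imath J_{2kn}$. Given a complex vector $z = (z_{i,r})$, I form $Z = \sum_{i,r} z_{i,r}\,(X_{a_i,r} - \braket{X_{a_i,r}})$ and expand products via $AB = \tfrac12(AB+BA) + \tfrac12[A,B]$; taking expectations in $\rho_{a_1,\ldots,a_n}$ yields
\[
0 \le \braket{Z^\dag Z} = z^\dag\bigl([[K(a_i,a_j)]] + \tfrac{\imath}{2} J_{2kn}\bigr) z,
\]
which is precisely (\ref{ce}).

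For the sufficiency direction, the plan is to realize $\bm{\varrho}$ as a mean zero quantum Gaussian process. The hypothesis (\ref{ce}) is exactly the classical existence criterion for a mean zero Gaussian state $\rho_{a_1,\ldots,a_n}$ on $L^2(\mathbb{R}^k)^{\otimes n} \cong L^2(\mathbb{R}^{kn})$ with prescribed $2kn \times 2kn$ real symmetric covariance matrix $\Sigma_{a_1,\ldots,a_n} := [[K(a_i,a_j)]]$; such a state is determined uniquely by its quantum characteristic function
\[
\tr\bigl(\rho_{a_1,\ldots,a_n}\, W(\xi)\bigr) = \exp\bigl(-\tfrac12 \xi^T \Sigma_{a_1,\ldots,a_n}\,\xi\bigr), \quad \xi \in \mathbb{R}^{2kn},
\]
on the Weyl operators. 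I define $\rho_{a_1,\ldots,a_n}$ by this prescription for every $(a_1,\ldots,a_n) \in \mathcal{S}_D$.

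It remains to verify the two consistency conditions of Section \ref{sec:2}. Permutation invariance is automatic: for a permutation $\pi$ the natural isomorphism $U_\pi$ intertwines Weyl operators, so $U_\pi\, \rho_{a_1,\ldots,a_n}\, U_\pi^{-1}$ is Gaussian with covariance obtained by the corresponding block permutation of $\Sigma_{a_1,\ldots,a_n}$; the symmetry hypothesis $\kappa_{r\,s}(a,b)=\kappa_{s\,r}(b,a)$ identifies this permuted matrix with $\Sigma_{a_{\pi(1)},\ldots,a_{\pi(n)}}$. Marginal consistency follows because the partial trace of a Gaussian state is again Gaussian with covariance equal to the relevant principal submatrix of the original — read off by restricting the characteristic function above to displacement parameters that vanish on the factor being traced out. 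The main obstacle is the appeal to the existence theorem for a quantum Gaussian state whose covariance matrix is the prescribed $\Sigma_{a_1,\ldots,a_n}$ obeying (\ref{ce}); once this classical result (due to Holevo and developed further in the quantum probability literature) is invoked, the remainder of the argument is just the bookkeeping sketched above.
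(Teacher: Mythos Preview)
Your proposal is correct and follows essentially the same route as the paper: necessity via the Robertson--Schr\"odinger uncertainty inequality for the $2kn$ canonical observables in the state $\rho_{a_1,\ldots,a_n}$, and sufficiency by constructing the process as a family of mean zero $kn$-mode Gaussian states with the prescribed covariance matrices. The paper's own proof is much terser---it simply cites the uncertainty relation for necessity and asserts the consistency of the Gaussian family for sufficiency---whereas you have spelled out the commutator computation, the $\braket{Z^\dag Z}\ge 0$ argument, and the verification of permutation and marginal consistency via the quantum characteristic function; but the underlying strategy is identical.
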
 

\begin{proof}
Since $\rho_{a_1,\cdots, a_n}$ is a $kn$-mode
state and $[[K(a_i,a_j)]]$ is the covariance matrix of the
position-momentum observables  $\left( X_{a_1 \, 1},
\cdots , X_{a_1 \, 2k} , X_{a_2 \, 1}, \cdots , X_{a_2
\, 2k}, \cdots , X_{a_n \, 1}, \cdots , X_{a_n \, 2k}
\right)$ in $L^2(\mathbb{R}^{kn})$, 
necessity is immediate from the uncertainty relation
fulfilled by such a covariance matrix \cite{holevo1, arvindsurvey,
MR2662722}.  To prove the converse
define $\rho_{a_1,\cdots , a_n}$ to be the mean zero
$kn$-mode Gaussian state with covariance matrix $[[ K(a_i,
a_j)]]$. Then $\{\rho_{a_1,\cdots, a_n}\}$ is a consistent
family of Gaussian states constituting the required quantum
process.  
\end{proof}
\begin{defin}
\par A kernel 
\[\mathcal{K}=[[K(a,b)]],\, a,b \in D\]
where $K(a,b)$ are real $2k \times 2k$ matrices satisfying
the conditions 
\begin{enumerate}
\item $K(a,b)^T = K(b,a)$.
\item $[[K(a_i, a_j)]] \ge 0$ for all $(a_1, a_2, \cdots,
a_n) \in \mathcal{S}_D$
\end{enumerate}
is called a $k$-mode \emph{classical covariance} kernel. 

\par If, in addition, the inequality (\ref{ce}) is fulfilled,
then it is called a ($k$- mode) \emph{quantum covariance
kernel}.
\end{defin}
\begin{corollary}
If $\mathcal{K}$ is a $k$-mode quantum covariance kernel and
$\mathcal{C}$ is a $k$-mode classical covariance kernel then
$\mathcal{K}+ \mathcal{C}$ is a $k$-mode quantum covariance
kernel.
\end{corollary}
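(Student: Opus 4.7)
The plan is to verify the two defining conditions of a quantum covariance kernel for $\mathcal{K}+\mathcal{C}$ directly from the definitions, using the fact that the sum of two positive semidefinite Hermitian matrices is positive semidefinite.

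First I would check the transpose symmetry: since $K(a,b)^T=K(b,a)$ and $C(a,b)^T=C(b,a)$ for all $a,b\in D$, adding the two identities gives $(K+C)(a,b)^T=(K+C)(b,a)$, so the first condition in the definition of a classical covariance kernel holds for $\mathcal{K}+\mathcal{C}$.

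Next, for any finite sequence $(a_1,\dots,a_n)\in\mathcal{S}_D$ I would decompose
\[
[[(K+C)(a_i,a_j)]]+\frac{\imath}{2}J_{2kn}
=\Bigl([[K(a_i,a_j)]]+\frac{\imath}{2}J_{2kn}\Bigr)+[[C(a_i,a_j)]].
\]
The first bracket is positive semidefinite because $\mathcal{K}$ is a quantum covariance kernel (inequality (\ref{ce})), and the second block matrix is positive semidefinite because $\mathcal{C}$ is a classical covariance kernel. The sum of two positive semidefinite Hermitian matrices is positive semidefinite, so the quantum inequality (\ref{ce}) is satisfied for $\mathcal{K}+\mathcal{C}$, completing the verification.

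There is essentially no obstacle here; the corollary is a direct consequence of the additivity of the quantum positivity condition. If one wanted a more structural statement, one could note that by Theorem \ref{th:1.1} the quantum process realizing $\mathcal{K}+\mathcal{C}$ can be constructed concretely by taking the classical Gaussian randomization (with covariance $\mathcal{C}$) of the quantum Gaussian process with covariance $\mathcal{K}$, in the spirit of the construction in equation (\ref{9.1}); but for the bare statement of the corollary the matrix-inequality argument above suffices.
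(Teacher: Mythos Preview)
Your proof is correct and matches the paper's approach exactly; the paper records the proof as the single word ``Immediate,'' and your argument is precisely the direct verification one would write out (you may wish to note in passing that condition~(2) of the definition also holds, by the same additivity of positive matrices). Your closing remark about realizing $\mathcal{K}+\mathcal{C}$ via a classical randomization of the quantum Gaussian process is also on target --- that is exactly the content of Theorem~\ref{th:f1}, which the paper proves immediately after the corollary.
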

\begin{proof} Immediate. \end{proof}

\par Let $\bm{\varrho}$ be a $k$-mode
quantum process over $D$ with quantum covariance kernel
$\mathcal{K} =[[ K(a,b)]]$, $a,b \in D$. Suppose
$\mathcal{C} = [[C(a,b)]]$, $a,b \in D$  is the covariance kernel of a real
$2k$-variate classical stochastic process so that the matrix
inequalities 
\[\sum_{i,j} \alpha_i \alpha_j C(a_i, a_j) \ge 0\]
for all real scalars $\alpha_1, \cdots, \alpha_n$, elements
$a_1, \cdots, a_n \in D$. Then the sum 
\[\mathcal{K} + \mathcal{C} = [[ K(a,b) + C(a,b) ]]\]
is the covariance kernel of a $k$-mode quantum process
$\bm{\sigma}$. We shall now realise such a process
$\bm{\sigma}$ by an explicit construction which is an
interaction between the quantum process $\bm{\varrho}$ and a
family of unitary conjugations mediated by a classical
process with covariance kernel $\mathcal{C}$.

\par To this end we start with the $1$-mode Hilbert space
$L^2(\mathbb{R})$, its Schr\"odinger position-momentum pair
$q,\,p$, the associated annihilation-creation pair $a,
\,a^\dag$ given by $a=2^{-\frac{1}{2}}(q + \imath p), \,
a^\dag=2^{-\frac{1}{2}}(q - \imath p)$  and the unitary Weyl
(displacement) operators $W(z) = \exp(za^\dag - \bar{z}
a),\, z \in \mathbb{C}$ satisfying the relations 
\[W(z) a W(z)^\dag = a -z , \quad z \in \mathbb{C}\]
with the convention that $z$ denotes the scalar as well as
the operator $z I$. This leads to the relations 
\begin{eqnarray}
W(2^{-\frac{1}{2}} z) q W(2^{-\frac{1}{2}} z)^\dag &=& q -x,
\label{f1} \\
W(2^{-\frac{1}{2}} z) p W(2^{-\frac{1}{2}} z)^\dag &=& p - y,
\label{f2}
\end{eqnarray} 
where $x = \re z, \, y = \im z$.

\par Now for $a \in D$, let 
\begin{eqnarray*}
\bm{z}_a &=& (z_{a\, 1}, z_{a\, 2}, \cdots , z_{a\, k})^T, \\
z_{a\, r} &=& x_{a\, r}+ \imath y_{a\, r},
\end{eqnarray*}
where $x_{a\,r} =\re z_{a\,r},\, y_{a\,r} =\im z_{a\,r}$.
Viewing $\mathcal{H}= L^2(\mathbb{R}^k)$ as $L^2(\mathbb{R})
\otimes L^2(\mathbb{R}) \otimes \cdots \otimes
L^2(\mathbb{R})$, $k$-fold, introduce the $k$-mode Weyl
operators
\[W(\bm{z}_a) = W(z_{a\, 1}) \otimes \cdots \otimes W(z_{a\,
k}).\]
Then the relations (\ref{f1}, \ref{f2}) yield the relations for
the operators $X_{a\,1},\cdots , X_{a\, 2k}$ defined above
as 
\begin{equation}\label{f3}
 W(2^{-\frac{1}{2}} \bm{z}_a) 
\begin{bmatrix}
X_{a\,1} \\ X_{a\,2} \\ \vdots \\ X_{a\, 2k} 
\end{bmatrix} 
W(2^{-\frac{1}{2}} \bm{z}_a)^\dag
=  \begin{bmatrix}
X_{a\,1} - \alpha_{a\, 1} \\ X_{a\,2} - \alpha_{a\,2}\\
\vdots \\ X_{a\, 2k} - \alpha_{a \, 2k} 
\end{bmatrix} 
\end{equation}
where 
\begin{equation}\label{f4}
\left( \alpha_{a\, 1}, \alpha_{a\, 2}, \cdots,
\alpha_{a\, 2k} \right) = \left( x_{a\, 1},y_{a\, 1}, x_{a\,
2},y_{a\, 2},\cdots, x_{a\, k},y_{a\, k}\right).
\end{equation}

\par Let $\left(\xi_{a\, 1},\eta_{a\, 1},
\xi_{a\,2},\eta_{a\, 2},\cdots, \xi_{a\, k},\eta_{a\,
k}\right)(\omega),\, \omega \in \Omega, \, a \in D$ be a
$2k$ real variate stochastic process defined on a
probability space $(\Omega, \mathcal{F}, P)$ with zero means
and covariance kernel $\mathcal{C} = [[ C(a,b) ]]$ where 
\begin{equation}\label{f4a}
C(a,b) = \mathbb{E} 
\begin{bmatrix}
\xi_{a\, 1}\\\eta_{a\, 1} \\ \xi_{a\,2} \\ \eta_{a\, 2} \\
\vdots \\ \xi_{a\, k} \\\eta_{a\,k} 
\end{bmatrix} \begin{bmatrix} \xi_{a\, 1},\eta_{a\, 1},
\xi_{a\,2},\eta_{a\, 2},\cdots, \xi_{a\, k},\eta_{a\,k}
\end{bmatrix}, \quad a,\, b \in D .
\end{equation}
Define the random unitary operators $U_a(\omega)$ in
$\mathcal{H}_a, \, a \in D$ by putting 
\begin{eqnarray}
\zeta_{a \, r} &=& \xi_{a\,r} + \imath \eta_{a\, r}, \quad r
=1,2,\cdots, k \label{f5} \\
U_a(\omega) &=& W(2^{-\frac{1}{2}}
\bm{\zeta}_a(\omega))^\dag \label{f6}
\end{eqnarray}
where $\bm{\zeta}_a=( \zeta_{a\,1} \cdots, \zeta_{a\,k}) \in
\mathbb{C}^k$. By following the remarks in \S \ref{sec:2}
equation (\ref{9.1}), define the $k$-mode quantum process
$\bm{\sigma}$ by 
\begin{equation}\label{f7}
\sigma_{a_1, \cdots, a_n} = \int_\Omega P(\dd\omega)
U_{a_1}(\omega) \otimes \cdots \otimes U_{a_n} (\omega)
\rho_{a_1,\cdots, a_n} U_{a_1}(\omega)^\dag \otimes \cdots \otimes
U_{a_n}(\omega)^\dag.
\end{equation}
Then we have the following theorem:
\begin{theorem}\label{th:f1}
The covariance kernel of the $\bm{\sigma}$ process
determined by the finite-partite states (\ref{f7}) is equal
to $\mathcal{K} + \mathcal{C}$.
\end{theorem}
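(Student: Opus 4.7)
The plan is to compute the first and second moments of the position--momentum observables $X_{a\,r}$ in the state $\sigma_{a,b}$ directly from formula (\ref{f7}), reduce everything to moments in $\rho_{a,b}$ plus classical averages of the shifts $\alpha_{a\,r}(\omega)$, and observe that the surviving terms precisely give $\mathcal{K}+\mathcal{C}$.

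First I would use cyclicity of the trace to transfer the random unitaries $U_{a_1}(\omega)\otimes\cdots\otimes U_{a_n}(\omega)$ from the state onto the observables. Since $X_{a\,r}$ acts non-trivially only on the $a$-th tensor factor, the conjugation decouples into a single-site conjugation $U_a^\dag X_{a\,r}\, U_a$. Because $U_a=W(2^{-1/2}\bm{\zeta}_a)^\dag$, the relation (\ref{f3}) together with the identification (\ref{f4}) of $\alpha_{a\,r}$ with the coordinates $(\xi_{a\,1},\eta_{a\,1},\dots,\xi_{a\,k},\eta_{a\,k})$ yields
\[
U_a(\omega)^\dag \, X_{a\,r}\, U_a(\omega) \;=\; X_{a\,r} \;-\; \alpha_{a\,r}(\omega),
\]
where the right-hand scalar $\alpha_{a\,r}(\omega)$ is just the $r$-th classical component of the stochastic process at site $a$.

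Next I would expand the symmetrised product. For $a,b\in D$ and $r,s\in\{1,\dots,2k\}$,
\[
\tfrac12\bigl((X_{a\,r}-\alpha_{a\,r})(X_{b\,s}-\alpha_{b\,s}) + (X_{b\,s}-\alpha_{b\,s})(X_{a\,r}-\alpha_{a\,r})\bigr)
\]
equals $\tfrac12(X_{a\,r}X_{b\,s}+X_{b\,s}X_{a\,r}) - \alpha_{a\,r}X_{b\,s} - \alpha_{b\,s}X_{a\,r} + \alpha_{a\,r}\alpha_{b\,s}$, because the scalars commute with all operators. Taking the trace against $\rho_{a,b}$ and then integrating against $P(d\omega)$, the linear-in-$\alpha$ terms vanish by $\mathbb{E}\alpha_{a\,r}=0$, and the quadratic term integrates to the $(r,s)$-entry of $C(a,b)$ by (\ref{f4a}). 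The same argument with one observable shows $\braket{X_{a\,r}}_{\sigma}=\braket{X_{a\,r}}_{\rho}$, so the product of means is unchanged.

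Subtracting $\braket{X_{a\,r}}_{\sigma}\braket{X_{b\,s}}_{\sigma}$ therefore yields
\[
\mathrm{cov}_{\sigma}(X_{a\,r},X_{b\,s}) \;=\; \kappa_{r\,s}(a,b) \;+\; C_{r\,s}(a,b),
\]
which is exactly the $(r,s)$-entry of $K(a,b)+C(a,b)$; assembling over $r,s$ gives $\mathcal{K}+\mathcal{C}$. The one point requiring care is handling the unbounded observables: the moment hypotheses on $\bm{\varrho}$ together with the standing assumption that all second-degree polynomial expectations are finite ensure that every trace and Fubini interchange above is legitimate, so this is not a substantive obstacle, just a bookkeeping item. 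The crux of the calculation is really the single identity (\ref{f3}), which converts a quantum Weyl conjugation into a classical shift and thereby adds the classical covariance to the quantum one.
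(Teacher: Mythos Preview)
Your proof is correct and follows essentially the same route as the paper's: transfer the random Weyl unitaries onto the observables via cyclicity of the trace, invoke (\ref{f3}) to convert conjugation into a scalar shift, expand, and kill the linear terms using the mean-zero assumption on the classical process. The only cosmetic difference is that you work with the symmetrised product $\tfrac12(X_{a\,r}X_{b\,s}+X_{b\,s}X_{a\,r})$ from the outset, which handles the cases $a\neq b$ and $a=b$ uniformly, whereas the paper treats them separately.
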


\begin{proof}
Consider the observable $X_{a\,j}$. Its expectation under
the $\bm{\sigma}$ process is given by 
\begin{eqnarray*}
\tr X_{a\, r} \sigma_a &=& \int P(\dd\omega) \tr
U_a(\omega)^\dag X_{a\,r} U_a(\omega) \rho_a \\
&=& \int P(\dd\omega) \tr \left( X_{a\, r} - \gamma_{a\, r}
\right) \rho_a
\end{eqnarray*}
where 
\[\left(\gamma_{a\, 1}(\omega), \cdots, \gamma_{a\, 2k}(\omega)\right)
= \left( \xi_{a\, 1}(\omega),\eta_{a\, 1}(\omega),
\cdots, \xi_{a\, k}(\omega),\eta_{a\,k}(\omega) \right).\]
Since the classical $\bm{\gamma}$ process has mean $\bm{0}$
we have 
\begin{equation}\label{f8}
\braket{X_{a\,r}}_{\bm{\sigma}} =
\braket{X_{a\,r}}_{\bm{\rho}}.
\end{equation}
Going to second order moments 
\begin{eqnarray*}
\tr X_{a\, r} X_{b\, s} \,\sigma_{a\,b} &=&  \int
P(\dd\omega) \tr X_{a\, r} X_{b\, s}  U_a(\omega) \otimes
U_b(\omega) \rho_{a\,b}  U_a(\omega)^\dag \otimes
U_b(\omega)^\dag\\
&=& \int P(\dd\omega)
W(2^{-\frac{1}{2}}\bm{\zeta}_a) \otimes
W(2^{-\frac{1}{2}}\bm{\zeta}_b) X_{a\, r} X_{b\, s}
W(2^{-\frac{1}{2}}\bm{\zeta}_a)^\dag \otimes
W(2^{-\frac{1}{2}}\bm{\zeta}_b)^\dag  \rho_{a\,b}\\
&=& \int P(\dd\omega) \tr \left( X_{a\, r} - \gamma_{a\,
r}\right) \left( X_{b\, s} - \gamma_{b\,
s}\right)\rho_{a\,b}\\
&=& \int
P(\dd\omega)\left[\braket{X_{a\,r}X_{b\,s}}_{\bm{\varrho}} +
\gamma_{a\,r}(\omega)\gamma_{b\,s}(\omega) -
\gamma_{a\,r}(\omega) \braket{X_{b\,s}} -
\gamma_{b\,s}(\omega) \braket{X_{a\,r}}\right] \\
&=& \braket{X_{a\,r}X_{b\,s}}_{\bm{\varrho}} +
C_{r\,s}(a,b).
\end{eqnarray*}
Let $a \neq b$. Then
\begin{eqnarray*}
\cov_{\bm{\sigma}}(X_{a\,r}, X_{b\,s}) &=&
\braket{X_{a\,r}X_{b\,s}}_{\bm{\sigma}} -
\braket{X_{a\,r}}_{\bm{\sigma}}
\braket{X_{b\,s}}_{\bm{\sigma}} \\
&=& \braket{X_{a\,r}X_{b\,s}}_{\bm{\varrho}} + C_{r\,s}(a,b)
- \braket{X_{a\,r}}_{\bm{\varrho}}
\braket{X_{b\,s}}_{\bm{\varrho}} \\
&=& K_{r\,s}(a,b) +  C_{r\,s}(a,b).
\end{eqnarray*}
Let $a=b$. Then $C_{r\,s}(a,b)= C_{r\,s}(a,a) =
C_{s\,r}(a,a)$.
\begin{eqnarray*}
\cov_{\bm{\sigma}}(X_{a\,r}, X_{a\,s}) &=&
\left\langle \frac{1}{2}(X_{a\,r} X_{a\,s} + X_{a\,s}
X_{a\,r}) \right\rangle_{\bm{\sigma}} -
\braket{X_{a\,r}}_{\bm{\varrho}}
\braket{X_{a\,s}}_{\bm{\varrho}} \\
&=& \left\langle \frac{1}{2}(X_{a\,r} X_{a\,s} + X_{a\,s}
X_{a\,r}) \right\rangle_{\bm{\varrho}} + C_{r\,s}(a,a) - 
\braket{X_{a\,r}}_{\bm{\varrho}}
\braket{X_{a\,s}}_{\bm{\varrho}}\\
&=&  \cov_{\bm{\varrho}}(X_{a\,r}, X_{a\,s}) +
C_{r\,s}(a,a)\\
&=&  K_{r\,s}(a,a) +  C_{r\,s}(a,a).
\end{eqnarray*}
\end{proof} 
\begin{rem}
\par If $\bm{\varrho}$ is a weakly stationary quantum
process,
and $\left(\xi_{a\, 1},\eta_{a\, 1}, \cdots, \xi_{a\,
k},\eta_{a\,k}\right)$ is
a weakly stationary classical process with mean $\bm{0}$
such that
\begin{eqnarray*}
K(a,b) &=& \widetilde{K}(b-a), \quad a,b\in D,\\
C(a,b) &=& \widetilde{C}(b-a),
\end{eqnarray*}
then $\bm{\sigma}$ is also a weakly stationary quantum
process.
\end{rem}

\begin{rem}

\par If $\bm{\varrho}$ is a Gaussian process then so is
$\bm{\sigma}$. 
\end{rem}


\section{The KMW spectrum of a weakly stationary $k$-mode quantum
process}\label{sec:4}

\par Let $\bm{\varrho}$ be a weakly stationary $k$-mode
quantum process over a countable discrete additive abelian
group $D$, with autocovariance map
$\wtilde{K}$. Let $\what{D}$ be the compact dual
multiplicative group of all characters of $D$. Denote by
$\mathcal{F}$ the Borel $\sigma$-algebra on $\what{D}$. 

\par Define 
\begin{equation}\label{eq:2.3}
L(a) = \wtilde{K}(a) + \frac{\imath}{2} \mathbbm{1}_{\{0\}}(a) J_{2k},
\quad a \in D.
\end{equation}
Then Theorem \ref{th:1.1} yields the following proposition. 
\begin{prop}\label{prop:2.1}
A real $2k \times 2k$ matrix-valued map
$\wtilde{K}$ is the
autocovariance map of a second order weakly stationary
$k$-mode quantum process if and only if the associated
map $L$ defined by (\ref{eq:2.3}) satisfies the
following matrix inequalities
\[ [[ L(a_s - a_r) ]] \ge 0, \quad r,\, s\in \{1,2,\cdots,
n\}\]
for all $(a_1, a_2, \cdots, a_n) \in \mathcal{S}_D,\, n=1,2,
\cdots$.
\end{prop}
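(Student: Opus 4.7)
The plan is to deduce Proposition \ref{prop:2.1} directly from Theorem \ref{th:1.1} by specialising to a translation-invariant kernel. Given a candidate $\widetilde{K}:D\to M_{2k}(\mathbb{R})$, I set $K(a,b):=\widetilde{K}(b-a)$. By the very definition of weak stationarity, $\widetilde{K}$ is the autocovariance map of some mean-zero weakly stationary $k$-mode quantum process if and only if $K(\cdot,\cdot)$ is realised as the covariance kernel of a $k$-mode quantum process over $D$. Theorem \ref{th:1.1} characterises precisely this realisability by the single block-matrix inequality
\[
[[ K(a_i,a_j) ]] + \tfrac{\imath}{2}J_{2kn} \ge 0
\qquad\text{for every } (a_1,\ldots,a_n)\in\mathcal{S}_D,
\]
so the entire task reduces to recognising the left-hand side as $[[L(a_s-a_r)]]$.

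This recognition is a simple book-keeping point. By (\ref{1.9}) the matrix $J_{2kn}$ is block-diagonal with diagonal blocks $J_{2k}$, so the $(r,s)$-block of $\tfrac{\imath}{2}J_{2kn}$ is $\tfrac{\imath}{2}\delta_{rs}J_{2k}$. Because the entries of a sequence in $\mathcal{S}_D$ are by definition pairwise distinct, $a_s - a_r = 0$ if and only if $r=s$, whence $\delta_{rs}=\mathbbm{1}_{\{0\}}(a_s-a_r)$. Combining this with the translation-invariant form $K(a_r,a_s)=\widetilde{K}(a_s-a_r)$ shows that the $(r,s)$-block of the displayed expression is exactly $L(a_s-a_r)$, so the two inequalities are identical.

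Only two subsidiary points remain. First, the symmetry hypothesis $\kappa_{r\,s}(a,b)=\kappa_{s\,r}(b,a)$ required by Theorem \ref{th:1.1} must hold for $K(a,b)=\widetilde{K}(b-a)$; it is automatic in the forward direction (covariances in any quantum state are symmetric in this sense) and in the converse direction it is forced by the Hermiticity of $[[L(a_s-a_r)]]$, which, using $J_{2k}^T=-J_{2k}$ and the reality of $\widetilde{K}$, delivers $\widetilde{K}(-x)=\widetilde{K}(x)^T$. Second, the Gaussian construction used in the proof of Theorem \ref{th:1.1} produces a mean-zero process whose bipartite covariance depends only on $b-a$, so the resulting process is automatically weakly stationary with the prescribed autocovariance map. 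I do not foresee any real obstacle: the content of the proposition is essentially the observation that the indicator $\mathbbm{1}_{\{0\}}$ in the definition of $L$ encodes the block-diagonal structure of $J_{2kn}$, so that the finite-dimensional uncertainty relation translates verbatim into the stationary setting.
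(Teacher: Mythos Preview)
Your proposal is correct and is precisely the unpacking of the paper's one-word proof ``Immediate'': the paper regards Proposition~\ref{prop:2.1} as an immediate specialisation of Theorem~\ref{th:1.1} to the translation-invariant kernel $K(a,b)=\widetilde{K}(b-a)$, and your block-by-block identification $[[K(a_r,a_s)]]+\tfrac{\imath}{2}J_{2kn}=[[L(a_s-a_r)]]$ via $\delta_{rs}=\mathbbm{1}_{\{0\}}(a_s-a_r)$ is exactly the bookkeeping that justifies this. Your subsidiary remarks on the symmetry condition and on the Gaussian realisation being automatically weakly stationary are also correct and merely make explicit what the paper leaves tacit.
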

\begin{proof}
Immediate.
\end{proof}

\begin{theorem}\label{th:2.2}
A real $2k \times 2k$ matrix-valued map
$\wtilde{K}$ on $D$ is the
autocovariance map of a second order weakly stationary
$k$-mode quantum process on $D$ if and only if there exists
a $2k \times 2k$ Hermitian positive matrix-valued measure
$\Phi$ on $(\widehat{D},\mathcal{F})$ satisfying the following
conditions:
\begin{enumerate}
\item $\Phi(\what{D}) = \wtilde{K}(0)$,
\item $\wtilde{K}(a) = \int_{\what{D}} \chi(a) \Phi(\dd\chi)$,
\item $\Phi(S) + \frac{\imath}{2} \lambda(S) J_{2k} \ge 0,
\, \forall S \in \mathcal{F}$; where $\lambda$ is the
normalized Haar measure of the compact group $\what{D}$. In
particular, $\lambda$ is absolutely continuous with respect
to $\tr \Phi$.
\item $\Phi(S^{-1}) = \overline{\Phi(S)} = \Phi(S)^T,\,
\forall S \in \mathcal{F}$.
\end{enumerate}
\end{theorem}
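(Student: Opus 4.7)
The plan is to read Theorem \ref{th:2.2} as the Bochner translation of Proposition \ref{prop:2.1}. The whole proof revolves around the elementary Fourier inversion identity
\[
\int_{\what{D}} \chi(a)\,\dd\lambda(\chi) \;=\; \mathbbm{1}_{\{0\}}(a), \qquad a \in D,
\]
which lets one convert between the positive definite function $L$ on $D$ of Proposition \ref{prop:2.1} and a matrix-valued measure $\Phi$ on $\what{D}$ satisfying an inequality rather than outright positivity. The trick is that $\mu := \Phi + \tfrac{\imath}{2} \lambda J_{2k}$ will be positive iff condition (3) holds, and $\mu$ will be the Bochner measure of $L$ iff $\Phi$ is the Bochner representer of $\wtilde{K}$.

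For necessity I would start with a weakly stationary process and invoke Proposition \ref{prop:2.1} to see that $L(a)=\wtilde{K}(a)+\tfrac{\imath}{2}\mathbbm{1}_{\{0\}}(a)J_{2k}$ is a positive definite $2k\times 2k$ matrix-valued function on the discrete abelian group $D$. The matrix-valued Bochner theorem on a countable discrete abelian group (obtained, for example, by applying the scalar Bochner theorem to $c^\dag L(\cdot) c$ for each $c\in\mathbb{C}^{2k}$ and polarising) produces a unique Hermitian positive $2k\times 2k$ matrix-valued Borel measure $\mu$ on $\what{D}$ with $L(a)=\int_{\what{D}} \chi(a)\,\mu(\dd\chi)$. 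I then define $\Phi(S) := \mu(S) - \tfrac{\imath}{2}\lambda(S)J_{2k}$. Condition (3) becomes the tautology $\Phi(S)+\tfrac{\imath}{2}\lambda(S)J_{2k}=\mu(S)\ge 0$; condition (2) follows by subtracting the Fourier-inversion identity; condition (1) is (2) at $a=0$; and condition (4) splits into Hermiticity ($\overline{\Phi(S)}=\Phi(S)^T$, inherited from $\mu$) and $\Phi(S^{-1})=\overline{\Phi(S)}$, which is forced because $\wtilde{K}$ takes real values so both $\int\chi(a)\,\overline{\Phi(\dd\chi)}$ and $\int\chi(a)\,\Phi(\cdot^{-1})(\dd\chi)$ equal $\wtilde{K}(-a)$ and uniqueness of Fourier coefficients applies.

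For sufficiency I would reverse the construction: given $\Phi$ satisfying (1)--(4), set $\wtilde{K}(a):=\int_{\what{D}}\chi(a)\,\Phi(\dd\chi)$. Condition (4) makes $\wtilde{K}(a)$ real, and with (1) gives $\wtilde{K}(0)=\Phi(\what{D})$. Then $\mu:=\Phi+\tfrac{\imath}{2}\lambda J_{2k}$ is, by (3), a positive Hermitian matrix-valued measure on $\what{D}$, so its inverse Fourier transform $\int \chi(a)\,\mu(\dd\chi)$ is positive definite on $D$. Using the Fourier-inversion identity once more, this transform is precisely $L(a)=\wtilde{K}(a)+\tfrac{\imath}{2}\mathbbm{1}_{\{0\}}(a)J_{2k}$, so Proposition \ref{prop:2.1} hands back a weakly stationary $k$-mode quantum process with autocovariance map $\wtilde{K}$. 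The final remark in (3) that $\lambda\ll\tr\Phi$ comes for free: if $\tr\Phi(S)=0$ then $\Phi(S)\ge 0$ forces $\Phi(S)=0$, hence (3) requires $\tfrac{\imath}{2}\lambda(S)J_{2k}\ge 0$; since $\imath J_{2k}$ has negative eigenvalues, this demands $\lambda(S)=0$.

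The main obstacle is really the bookkeeping around condition (4) and the interaction between (i) the reality of $\wtilde{K}$, (ii) the symmetry $\wtilde{K}(-a)=\wtilde{K}(a)^T$ (which is the $D$-side shadow of $\kappa_{rs}(a,b)=\kappa_{sr}(b,a)$), and (iii) Hermiticity of $\Phi$; one has to verify that all three are captured by $\Phi(S^{-1})=\overline{\Phi(S)}=\Phi(S)^T$ and not one more. Beyond that the proof is a clean application of the matrix-valued Bochner theorem combined with the additive twist $\mu\leftrightarrow\Phi$ by the single ``uncertainty term'' $\tfrac{\imath}{2}\lambda J_{2k}$.
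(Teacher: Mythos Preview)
Your argument is correct and follows the same Bochner--Fourier-inversion strategy as the paper; the only bookkeeping difference is that the paper applies Bochner \emph{twice}---once to $L$ to get a positive measure $\Psi$ and once to $\wtilde{K}=\re L$ to get $\Phi$ directly as a positive measure---and then identifies $\Psi-\Phi=\tfrac{\imath}{2}\lambda J_{2k}$ by Fourier uniqueness, whereas you apply Bochner once to $L$ and \emph{define} $\Phi:=\mu-\tfrac{\imath}{2}\lambda J_{2k}$.

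The one point you should make explicit is that the theorem asks for $\Phi$ to be a \emph{positive} Hermitian matrix-valued measure, and your definition of $\Phi$ as a difference does not make $\Phi(S)\ge 0$ automatic. The paper's double-Bochner route gets this for free. In your route you can recover it in one line from the conditions you have already checked: by (4) and Haar invariance, condition (3) applied to $S^{-1}$ reads $\overline{\Phi(S)}+\tfrac{\imath}{2}\lambda(S)J_{2k}\ge 0$; conjugating gives $\Phi(S)-\tfrac{\imath}{2}\lambda(S)J_{2k}\ge 0$, and averaging with (3) yields $\Phi(S)\ge 0$. With that sentence added, your proof is complete and slightly more economical than the paper's.
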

\begin{proof}
Let $\wtilde{K}$ be the autocovariance map of a weakly
stationary $k$-mode process. Define $L$ by (\ref{eq:2.3}).
By Proposition \ref{prop:2.1}  the matrices $[[
L(a_s-a_r)]],\, r,s\in\{1,2,\cdots, n\},\, (a_1, a_2,
\cdots, a_n) \in \mathcal{S}_D$ are positive. Hence for any
vector $\bm{u} \in \mathbb{C}^{2k}$, the function 
\begin{equation}\label{eq:2.4}
\psi_{\bm{u}}(a) = \bm{u}^\dag L(a) \bm{u},\quad a \in D 
\end{equation}
is positive definite on the abelian group $D$ in the sense
of Bochner. By Bochner's theorem there exists a totally
finite measure $\nu_{\bm{u}}$ on $\mathcal{F}$ satisfying
the relations
\begin{eqnarray}
\psi_{\bm{u}}(a) &=& \int_{\what{D}} \chi(a) \,
\nu_{\bm{u}}(\dd\chi), \quad a \in D, \label{eq:2.5}\\
\psi_{\bm{u}}(0) &=& \bm{u}^\dag L(0) \bm{u} \nonumber \\
	&=& \bm{u}^\dag \left(\wtilde{K}(0) + \frac{\imath}{2}
J_{2k} \right) \bm{u}, \quad \bm{u} \in \mathbb{C}^{2k}.
\label{eq:2.6}
\end{eqnarray}
By (\ref{eq:2.4}) the left hand side of (\ref{eq:2.5}) is a
quadratic form in $\bm{u}$ for each fixed $a$ in $D$. By the
bijective correspondence between totally finite measures on
$\what{D}$ and their Fourier transforms on $D$ it follows
that there exists a $2k \times 2k$ Hermitian positive
matrix-valued measure $\Psi$ on $\mathcal{F}$ such that
\begin{eqnarray}
L(a) &=& \int_{\what{D}} \chi(a) \, \Psi(\dd\chi), \quad a
\in D \label{eq:2.7} \\
L(0) &=& \wtilde{K}(0) + \frac{\imath}{2} J_{2k} = \Psi(\what{D}) \ge
0. \label{eq:2.8}
\end{eqnarray}
Now define
\[ \phi_{\bm{u}}(a) = \bm{u}^\dag \wtilde{K}(a) \bm{u}, \quad a \in
D, \quad \bm{u} \in \mathbb{C}^{2k}.\]
By (\ref{eq:2.3}), $\wtilde{K}(a) = \re L(a)$ and
hence $[[\wtilde{K}(a_s -
a_r) ]] \ge 0$ for any $(a_1,a_2, \cdots, a_n) \in
\mathcal{S}_D$. In other words, $\phi_{\bm{u}}$ is also a
positive definite function on $D$ and by the same arguments
as employed for $L$ we have the relations
\begin{eqnarray}
\wtilde{K}(a) &=& \int_{\what{D}} \chi(a)\, \Phi(\dd\chi),
\label{eq:2.9} \\
\wtilde{K}(0) &=& \Phi(\what{D}), \label{eq:2.10}
\end{eqnarray}
where  $\Phi$ is again a $2k \times 2k$ Hermitian positive
matrix-valued measure on $\mathcal{F}$.

\par By (\ref{eq:2.3}) 
\begin{eqnarray}
L(a) -\wtilde{K}(a) &=& \frac{\imath}{2} \mathbbm{1}_{\{0\}} (a)
J_{2k} \nonumber \\
&=& \left[ \frac{\imath}{2} \int_{\what{D}}
\chi(a)\lambda(\dd\chi)\right]J_{2k}, \quad a \in D
\label{eq:2.11}
\end{eqnarray}
Subtracting (\ref{eq:2.9}) from (\ref{eq:2.7}) and using
(\ref{eq:2.11}) we have 
\[\int_{\what{D}} \chi(a)(\Psi -\Phi) (\dd\chi) = \left[
\frac{\imath}{2} \int_{\what{D}}
\chi(a)\lambda(\dd\chi)\right]J_{2k} \]
for all $a \in D$. Thus by uniqueness of Fourier
transform we have 
\[\Psi(S) - \Phi(S) = \frac{\imath}{2} \lambda(S)
J_{2k},\quad \forall S \in \mathcal{F}.\]
Thus 
\begin{equation} \label{eq:2.12}
\Phi(S) + \frac{\imath}{2} \lambda(S) J_{2k} \ge 0,\quad
\forall S \in \mathcal{F}.
\end{equation}
Now property (1) follows from (\ref{eq:2.10}), property (2)
from  (\ref{eq:2.9}), and property (3) from (\ref{eq:2.12}).
If $\tr \Phi(S) =0$ then $\Phi(S) =0$ and (\ref{eq:2.12})
implies 
\[\frac{\imath}{2} \lambda(S) J_{2k} \ge 0\]
and this is positive only if $\lambda(S) =0$. In other words
$\lambda \ll \tr \Phi$.

\par To prove property (4) of $\Phi$ we introduce the map
$\tau: \what{D} \rightarrow \what{D},\, \tau(\chi) =
\overline{\chi}= \chi^{-1}$ and observe that 
\begin{eqnarray*}
\wtilde{K}(a) = \overline{\wtilde{K}(a)} &=& \int_{\what{D}}
\overline{\chi(a)}\, \overline{\Phi}(\dd\chi)\\
&=& \int \chi(a)\, \overline{\Phi}\tau^{-1}(\dd\chi) \\
&=& \int \chi(a)\, \Phi(\dd\chi).
\end{eqnarray*} 
Thus 
\[\overline{\Phi} \tau^{-1} = \Phi,\]
or 
\[\Phi(S^{-1}) = \overline{\Phi(S)} = \Phi^T(S),\quad
\forall S \in \mathcal{F}.\]
This completes the proof of necessity. To prove sufficiency
consider a $2k \times 2k$ Hermitian positive matrix-valued
measure $\Phi$ satisfying properties (3) and (4) of the
theorem. Define 
\[\wtilde{K}(a) =\int_{\what{D}} \chi(a)\, \Phi(\dd\chi).\]
Property (3) implies that the function $L(a), \, a\in D$
defined by 
\begin{eqnarray*}
L(a) &=&\wtilde{K}(a) + \frac{\imath}{2} \mathbbm{1}_{\{0\}} (a)
J_{2k}\\
&=& \int_{\what{D}} \chi(a) \, \left( \Phi +\frac{\imath}{2}
\lambda J_{2k} \right) (\dd\chi)
\end{eqnarray*}
satisfies the matrix inequalities $[[ L(a_s - a_r) ]] \ge 0$
for any sequence $(a_1, \cdots, a_n) \in \mathcal{S}_D$. By
Proposition \ref{prop:2.1}, $\wtilde{K}$ is the autocovariance
function of a strictly stationary mean zero $k$-mode quantum
Gaussian process.
\end{proof}
\begin{rem}
As already described in the introduction, we call equation
(2) in Theorem \ref{th:2.2}, the \emph{spectral
representation} of the autocovariance map $\wtilde{K}$ and
say that $\Phi$ is the \emph{Kolmogorov-Wiener-Masani (KWM)
spectrum} of the $k$-mode weakly stationary quantum process.
Theorem \ref{th:2.2} enables us to construct a whole class
of examples of KWM spectra and hence autocovariance maps as
follows. Choose and fix any Borel map $\chi \mapsto M(\chi),
\, \chi \in \what{D}$ where $M(\chi)$ is a $k$-mode quantum
covariance matrix of order $2k$, so that
\[M(\chi) +\frac{\imath}{2} J_{2k} \ge 0, \quad \text{ for
every } \chi\in D.\]
Assume that $M(\cdot)$ is integrable with respect to the
normalised Haar measure $\lambda$ on $\what{D}$. Let $\Psi$
be any totally finite positive Hermitian $2k \times 2k$
matrix-valued measure on $(\what{D}, \mathcal{F})$
satisfying the conjugate symmetry condition $\Psi(S^{-1}) =
\overline{\Psi(S)}$ for any Borel set $S \subset \what{D}$.
Define
 \[\Phi(S) = \int_{\what{D}} M(\chi) \lambda(\dd\chi)
+\Psi(S),\quad S \in \mathcal{F}.\]
Then by Theorem \ref{th:2.2}, $\Phi$ is the KWM spectrum of a
stationary quantum Gaussian process over $D$ with
autocovariance map $\wtilde{K}$ given by equation
(2) of the theorem.

\end{rem}

\begin{rem}
\par The second part of property (3) of $\Phi$ in Theorem
\ref{th:2.2} implies that $\lambda(S)=0$ whenever $\tr
\Phi(S)=0$. In other words the KWM spectrum of a weakly
stationary $k$-mode quantum process over $D$ cannot admit a
gap of positive Haar measure in $\what{D}$. For example,
when $D=\mathbb{Z}$ and $\what{D}$ is identified with
$[0,2\pi]$, the KWM spectrum of a stationary  $k$-mode
quantum Gaussian process over $\mathbb{Z}$ cannot admit an
interval gap. 
\end{rem}

\begin{rem}\label{rem:4.3}
\par In Theorem \ref{th:2.2}, express the KWM spectrum
$\Phi$ as 
\[\Phi = [[\phi_{rs}]],\, r,s\in\{1,2,\cdots,2k\}\]
and write 
\begin{eqnarray*}
\Phi_q &=& [[ \phi_{2i-1,2j-1}]], \quad i,j \in
\{1,2,\cdots, k\}\\
\Phi_p &=& [[ \phi_{2i,2j}]], \quad i,j \in
\{1,2,\cdots, k\}.
\end{eqnarray*}
In the inductive limit C* probability space
$(\mathcal{B}_\infty, \omega)$ associated with the process
$\bm{\varrho}$ described in \S\ref{sec:2} the commuting family of position
observables $\{q_{a\,r}: a\in D, \, r \in \{1,2,\cdots,
k\}\}$ affiliated to $\mathcal{B}_\infty$ execute a
classical weakly stationary process with spectrum $\Phi_q$.
A similar property holds for the family $\{p_{a\,r}: a\in D,
\, r \in\{1,2,\cdots, k\}\}$.

\par This raises the question that under what conditions on the
spectrum $\Phi$, these processes enjoy properties like
ergodicity, weak mixing, strong mixing etc. The results of
G. Maruyama \cite{maru1} suggest that a minimum requirement
would be the absence of atoms in the spectrum $\Phi$. 
\par Suppose $\Phi$ has no atoms. For arbitrary real
scalars $c_r, \, 1 \le r \le 2k$, consider the associated
observables 
\[Z_a = \sum_{r=1}^{2k} \left( c_{2r-1} q_{ar} + c_{2r}
p_{ar} \right), \quad a \in D.\]
Then $\{ Z_a, a\in D\}$ executes a classical Gaussian
process with values in the real line and autocovariance
function $\bm{c}^T K(\cdot) \bm{c}$ with $\bm{c}^T= (c_1,
c_2, \cdots, c_{2k})$ and spectrum equal to $\bm{c}^T
\Phi(\cdot) \bm{c}$, a measure in $\what{D}$. Now let $D=
\mathbb{Z}$ be the integer group. Then Maruyama's theorem
implies that this scalar-valued process is, indeed, weakly
mixing, and in particular ergodic. If $\lim_{a \rightarrow
\infty} \bm{c}^T K(a) \bm{c} =0$, then this scalar-valued
process is also strongly mixing.
\end{rem}
\begin{rem}
\par Following \cite{krprb} one can introduce the observable 
\[N_{a\,j} = \frac{1}{2} \left( q_{a\,j}^2 + p_{a\,j}^2 - 1
\right), \quad 1 \le j \le k,\]
which is the number of particles (photons) in the $j$-th
mode at the site $a$. If the underlying process
$\bm{\varrho}$ is Gaussian with mean $\bm{0}$ then
\[ \braket{N_{a\,j}} = \frac{1}{2} \left\{ \phi_{2j-1, 2j-1}
(\what{D}) + \phi_{2j,2j}(\what{D}) - \frac{1}{2}
\right\}.\]
This is a consequence of property (1) of $\Phi$ in the
Theorem \ref{th:2.2} and Corollary 3.1, equation (3.4) \cite{krprb}.

\par This shows that the relationships between photon
numbers and KWM spectrum need a deeper exploration. 
\end{rem}

\par Theorem \ref{th:2.2} can be strengthened as follows:
\begin{theorem}\label{th:4.2}
Let $D$ be a countable, discrete, additive abelian group
with its compact character group $\what{D}$ and let
$\lambda$ be the normalized Haar measure of $\what{D}$ on
its Borel $\sigma$-algebra $\mathcal{F}$. Suppose $\Phi$ is
a complex, totally finite $2k \times 2k$ positive matrix-valued 
measure on $\mathcal{F}$. Then $\Phi$ is the KWM
spectrum of a $k$-mode weakly stationary quantum process
$\bm{\varrho}$ over $D$ if and only if $\Phi$ admits the
representation 
\begin{equation}\label{q1}
\Phi(S) = \int_S F(\chi) \lambda(\dd \chi) + \Psi(S), \quad
S \in \mathcal{F}
\end{equation}
where $F$ is a $2k \times 2k$ complex positive matrix-valued
Borel function satisfying the matrix inequality 
\begin{equation}\label{q2}
F(\chi) +\frac{\imath}{2} J_{2k} \ge 0, \quad \chi \in D,
\end{equation}
and 
\begin{equation}\label{q3}
\Psi(S) = [[ \psi_{rs}(S)]], \quad S \in \mathcal{F}
\end{equation}
is a $2k \times 2k$ positive matrix-valued measure on
$\mathcal{F}$ with each $\psi_{rs}$ being singular with
respect to $\lambda$.
\par In particular, $F$ can be chosen to satisfy 
\begin{equation}\label{q4}
\det \re F(\chi) \ge \frac{1}{4^k} ,\quad  \chi \in \what{D}
\end{equation}
and the absolutely continuous part of $\bm{u}^T \Phi(\cdot)
\bm{u}$ is equivalent to $\lambda$ for every nonzero element
$\bm{u}$ of $\mathbb{C}^{2k}$.
\end{theorem}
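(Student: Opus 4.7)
The proof splits into the iff characterisation and the ``in particular'' supplement. The iff is Theorem \ref{th:2.2} combined with the componentwise Lebesgue decomposition of $\Phi$ with respect to $\lambda$: writing $\phi_{rs} = f_{rs}\lambda + \psi_{rs}$ with $\psi_{rs}\perp\lambda$ and assembling $F = [[f_{rs}]]$, $\Psi = [[\psi_{rs}]]$ produces (\ref{q1}). For the reverse implication the identity
\[\Phi(S) + \tfrac{\imath}{2}\lambda(S)J_{2k} = \int_S\bigl(F(\chi) + \tfrac{\imath}{2}J_{2k}\bigr)\lambda(\dd\chi) + \Psi(S)\]
exhibits the left-hand side as a sum of positive matrices, supplying Theorem \ref{th:2.2}(3); positivity and conjugate symmetry of $\Phi$ supply the remaining hypotheses of Theorem \ref{th:2.2}. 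For the forward implication, matrix-valued Lebesgue differentiation applied to $\Phi(B_n)\ge 0$ and $\Phi(B_n) + \tfrac{\imath}{2}\lambda(B_n)J_{2k}\ge 0$ along Borel neighbourhoods $B_n\ni\chi$ shrinking nicely to $\chi$ yields $F(\chi)\ge 0$ and (\ref{q2}) for $\lambda$-a.e.\ $\chi$; a harmless redefinition on the exceptional null set makes both hold everywhere. Positivity of $\Psi$ is immediate: $\Psi$ concentrates on a $\lambda$-null set $N$ and $\Psi(S) = \Phi(S\cap N)\ge 0$.

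For the ``in particular'' supplement the decisive input is the conjugate symmetry $\Phi(S^{-1}) = \overline{\Phi(S)}$ of Theorem \ref{th:2.2}(4). Since $\lambda$ is invariant under $\chi\mapsto\chi^{-1}$, uniqueness of the Lebesgue decomposition forces $F(\chi^{-1}) = \overline{F(\chi)}$ for $\lambda$-a.e.\ $\chi$; applying (\ref{q2}) at $\chi^{-1}$ therefore gives $\overline{F(\chi)} + \tfrac{\imath}{2}J_{2k}\ge 0$, and averaging with $F(\chi) + \tfrac{\imath}{2}J_{2k}\ge 0$ produces
\[\re F(\chi) + \tfrac{\imath}{2}J_{2k}\ge 0 \quad\text{$\lambda$-a.e.}\]
Since $\re F(\chi)$ is real symmetric, this is precisely the quantum covariance condition for $k$ modes, so Williamson's symplectic normal form yields symplectic eigenvalues $\nu_j(\chi)\ge 1/2$ with $\det\re F(\chi) = \prod_{j=1}^{k}\nu_j(\chi)^2\ge 1/4^k$ almost everywhere. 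Redefining $F$ on the exceptional $\lambda$-null set---say by setting $F = \tfrac{1}{2}I_{2k}$ there, which satisfies every inequality in play---secures (\ref{q4}) for every $\chi\in\what{D}$ and, being a modification on a $\lambda$-null set, does not alter the given measure $\Phi = F\lambda + \Psi$.

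Taking complex conjugate of $\overline{F(\chi)} + \tfrac{\imath}{2}J_{2k}\ge 0$ also gives $F(\chi) - \tfrac{\imath}{2}J_{2k}\ge 0$, so the two-sided inequality $F(\chi)\pm\tfrac{\imath}{2}J_{2k}\ge 0$ holds and forces $F(\chi)$ to be strictly positive definite. Indeed, if $F(\chi)\bm u = 0$ for some $\bm u\neq 0$, the two branches give $\bm u^\dag J_{2k}\bm u = 0$; expanding $(\bm u + \epsilon\bm v)^\dag\bigl(F + \tfrac{\imath}{2}J_{2k}\bigr)(\bm u + \epsilon\bm v)\ge 0$ to first order in $\epsilon$ for arbitrary $\bm v$ then kills the linear term, yielding $\bigl(F + \tfrac{\imath}{2}J_{2k}\bigr)\bm u = 0$; combined with $F\bm u = 0$ this gives $J_{2k}\bm u = 0$, and invertibility of $J_{2k}$ forces $\bm u = 0$, a contradiction. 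Hence $\bm u^\dag F(\chi)\bm u > 0$ for every nonzero $\bm u$ and every $\chi$, so the absolutely continuous part $(\bm u^\dag F(\cdot)\bm u)\lambda$ of the positive measure $\bm u^\dag\Phi(\cdot)\bm u$ is strictly positive and therefore equivalent to $\lambda$.

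The principal analytic hurdle is the matrix-valued Lebesgue differentiation step that promotes $\Phi(S) + \tfrac{\imath}{2}\lambda(S)J_{2k}\ge 0$ to its pointwise counterpart (\ref{q2}); on the second-countable compact abelian group $\what{D}$ this is routine via a compatible metric and a Besicovitch or martingale-type differentiation basis, but it is the one genuine analytic ingredient. Everything subsequent---Williamson's theorem and the short perturbation argument for positive-definiteness---is finite-dimensional linear algebra.
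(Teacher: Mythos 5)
Your proof is correct and follows essentially the same skeleton as the paper's: Lebesgue-decompose $\Phi$ with respect to $\lambda$, extract the pointwise inequality $F(\chi)+\frac{\imath}{2}J_{2k}\ge 0$ $\lambda$-a.e., use conjugate symmetry of $\Phi$ together with inversion-invariance of $\lambda$ to get $F(\chi^{-1})=\overline{F(\chi)}$ and hence $\re F(\chi)+\frac{\imath}{2}J_{2k}\ge 0$, and finally adjust $F$ to $\frac{1}{2}I_{2k}$ on the exceptional null set. The one step where you genuinely diverge is the one you flag as the ``principal analytic hurdle'': the paper needs no differentiation basis. It fixes a single symmetric Borel set $S_0=S_0^{-1}$ of full Haar measure carrying the absolutely continuous parts of all the entries simultaneously (positivity of $\Phi$ forces the off-diagonal restrictions to $S_0$ to be $\ll\lambda$ once the diagonal ones are), so that for every Borel $S\subset S_0$ the singular part vanishes and $\int_S\bigl(F(\chi)+\frac{\imath}{2}J_{2k}\bigr)\lambda(\dd\chi)\ge 0$; testing against a countable dense set of vectors then gives the a.e.\ matrix inequality with no Besicovitch or martingale machinery. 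Your differentiation route is valid on the compact metrizable group $\what{D}$, but it is heavier than necessary. On the other hand you supply two things the paper outsources or leaves implicit: the Williamson symplectic-eigenvalue derivation of $\det\re F\ge 4^{-k}$ (the paper simply cites \cite{MR3070484}), and the pointwise strict positive-definiteness of $F$ obtained from the two-sided inequality $F\pm\frac{\imath}{2}J_{2k}\ge 0$ via the kernel argument, which is exactly what is needed to justify the final assertion that the absolutely continuous part of $\bm{u}^\dag\Phi(\cdot)\bm{u}$ is equivalent to $\lambda$; the paper's text does not spell this out, and your argument for it is correct.
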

\begin{proof}
 \par Let $\Phi = [[\phi_{rs}]], \, r,s \in \{1,2,\cdots,
2k\}$ be the KWM spectrum of a $k$-mode weakly stationary
quantum process $\bm{\varrho}$ over $D$. By part (3) of
Theorem \ref{th:2.2} it follows that
\[ \begin{bmatrix}
\phi_{2r-1, 2r-1} (S) & \phi_{2r-1, 2r} (S)
+\frac{\imath}{2} \lambda(S) \\
\phi_{2r, 2r-1}(S) - \frac{\imath}{2} \lambda(S) & \phi_{2r, 2r}(S)
\end{bmatrix} \ge 0 \]
for any $S \in \mathcal{F}$. Suppose $\phi_{2r-1,
2r-1}(S)=0$. Then positivity of $\Phi$ implies that
$\phi_{2r-1, 2r}(S) = \phi_{2r, 2r-1}(S) =0$ and hence 
\[ \begin{bmatrix}
0 & \frac{\imath}{2} \lambda(S)\\
- \frac{\imath}{2} \lambda(S) & \phi_{2r, 2r}(S) 
\end{bmatrix} \ge 0.\]
Since the determinant of the left hand side in the inequality is
nonnegative we conclude that $\lambda(S) = 0$. In other
words $\lambda \ll \phi_{2r-1, 2r-1}$. By the same argument
$ \lambda \ll \phi_{2r, 2r}$. In other words $\lambda$ is
absolutely continuous with respect to every diagonal entry
of $\Phi$.

\par  Choose and fix an $S_0 \in \mathcal{F}$ such that $S_0
= S_0^{-1},\, \lambda(S_0) =1$ and the measure $\mu_{rr}$
defined by 
\[\mu_{rr} (S) = \phi_{rr} (S \cap S_0) , \quad S \in
\mathcal{F}\]
is the part of $\phi_{rr}$ equivalent to $\lambda$ and the
measure $\psi_{rr}$ defined by 
\[\psi_{rr}(S) = \phi_{rr} (S \cap (\what{D} \backslash S_0)), \quad S
\in \mathcal{F}\]
is the part of $\phi_{rr}$ singular with respect to
$\lambda$, so that 
\[ \phi_{rr} = \mu_{rr} + \psi_{rr}, \quad \forall r
=1,2,\cdots, 2k.\]
Now define
\begin{eqnarray*}
\mu_{rs}(S) &=& \phi_{rs} (S \cap S_0), \quad S \in
\mathcal{F},\\
\psi_{rs}(S) &=& \phi_{rs} (S \cap (\what{D} \backslash S_0)), \quad S
\in \mathcal{F}.
\end{eqnarray*}
If $\lambda(S) =0$, then $\mu_{rr}(S) =0$, so $\phi_{rr}(S
\cap S_0) =0, \, \phi_{rs}(S \cap S_0) =0$ and therefore
$\mu_{rs}(S) =0$. In other words $\mu_{rs}\ll \lambda$. By
definition all the measures  $\psi_{rs}$ are singular with
respect to $\lambda$ and 
\[ \phi_{rs} = \mu_{rs} + \psi_{rs}, \quad r,s \in
\{1,2,\cdots, 2k\}.\]
Define $f_{rs}$ to be the Radon Nykodym derivative of
$\mu_{rs}$ with respect to $\lambda$ and put
\[F(\chi) = f_{rs}(\chi),  \quad r,s \in
\{1,2,\cdots, 2k\}.\]
Now $F$ is defined a.e. with respect to $\lambda$ and
\begin{equation}\label{q5}
\Phi(S) = \int_S F(\chi) \lambda(\dd \chi) + \Psi(S), \quad
S \in \mathcal{F}
\end{equation}
where every entry $\psi_{rs}$ of $\Psi$ is singular with
respect to $\lambda$. By the choice of $S_0$ it follows that 
\[ \int_S F(\chi) \lambda(\dd \chi) + \frac{\imath}{2}
\lambda(S) J_{2k} \ge 0\]
for all $S \subset S_0, \, S \in \mathcal{F}$, so that 
\[ \int_S \left( F(\chi) + \frac{\imath}{2} J_{2k} \right)
\lambda(\dd\chi) \ge 0.\]
Thus 
\begin{equation}\label{q6}
F(\chi) + \frac{\imath}{2} J_{2k} \ge 0, \quad \text{ a.e.
} \chi(\lambda) 
\end{equation}
and also by the symmetry of the Haar measure
\begin{equation}\label{q7}
F(\chi^{-1}) + \frac{\imath}{2} J_{2k} \ge 0.
\end{equation}
On the other hand
\begin{eqnarray*}
\Phi(S^{-1}) &=& \int_S F(\chi^{-1}) \lambda(\dd\chi) +
\Psi(S^{-1}),\\
\overline{\Phi(S)} &=& \int_S \overline{F(\chi)}
\lambda(\dd\chi) + \overline{\Psi(S)}.
\end{eqnarray*}
The conjugate symmetry of $\Phi$ and the consequent
conjugate symmetry of $\Psi$, thanks to the choice of $S_0$
imply 
\[F(\chi^{-1}) = \overline{F(\chi)}, \quad \text{ a.e. }
\chi(\lambda). \]
Choosing $F(\chi) = \frac{1}{2} I_{2k}$ whenever this fails
we may assume that $ F(\chi^{-1}) = \overline{F(\chi)}$ for
all $\chi$. Now (\ref{q6}) and (\ref{q7}) imply that $F$ can
be altered on a set of $\lambda$-measure zero so that 
\[\re F(\chi) = \frac{F(\chi) + F(\chi^{-1})}{2} \ge
-\frac{\imath}{2} J_{2k}\]
holds for every $\chi$. In other words $F$ is a complex positive $2k \times 2k$
matrix whose real part is the quantum covariance matrix of
position and momentum observables in a $k$-mode state in
$L^2(\mathbb{R}^k)$. It follows from \cite{MR3070484} that
\[\det \re F(\chi) \ge \frac{1}{4^k}  \quad \forall \chi \in
\what{D}\]
and the representation (1) holds. 
\par The converse is already a part of Theorem \ref{th:2.2}.
\end{proof}

\begin{rem}
Suppose the quantum process $\bm{\varrho}$ of Theorem
\ref{th:4.2} is Gaussian and symmetric under the reflection
transformation $a \rightarrow -a$ in $D$. Then the
autocovariance function $K$ of the process $\bm{\varrho}$
satisfies the condition $K(a) = K(-a), \, a \in D$ and the
KWM spectrum $\Phi$ is real, i.e., $\Phi = \overline{\Phi}$.
Then (\ref{q4}) implies 
\[\int_{\what{D}} \log \det \Phi (\chi) \lambda(\dd\chi) > -
\infty . \]
When $D= \mathbb{Z}$ is the integer group it follows from
the Wiener Masani theorem, in particular, that the position
observables $(q_{n\,1}, \cdots, q_{n\,k})$ execute a purely
indeterministic shift invariant $k$-variate Gaussian
process. So do the momentum observables $(p_{n\,1}, \cdots,
p_{n\,k})$.
\end{rem}
\bibliographystyle{acm}
\bibliography{biblio}

\end{document}